\newtheorem{proposition}{Proposition}
\begin{document}

\title{SAT as a game}

\author{Olivier Bailleux}
\affil{Université de Bourgogne Franche-Comté}

\date{}

\maketitle 

\begin{abstract}
We propose a funny representation of SAT. While the primary interest is to present propositional satisfiability in a playful way for pedagogical purposes, it could also inspire new search heuristics.
\end{abstract}

\section{The colored tokens game}
The game board consists of boxes containing round and square tokens of different colors. For each color, the player must remove either all the square tokens or all the round ones. The game is won if there is at least one token in each box. The result is the same regardless of the order in which the tokens are removed. It depends only on the shape removed for each color.

\begin{figure}[h!]
\begin{center}
\includegraphics[width=0.7\columnwidth]{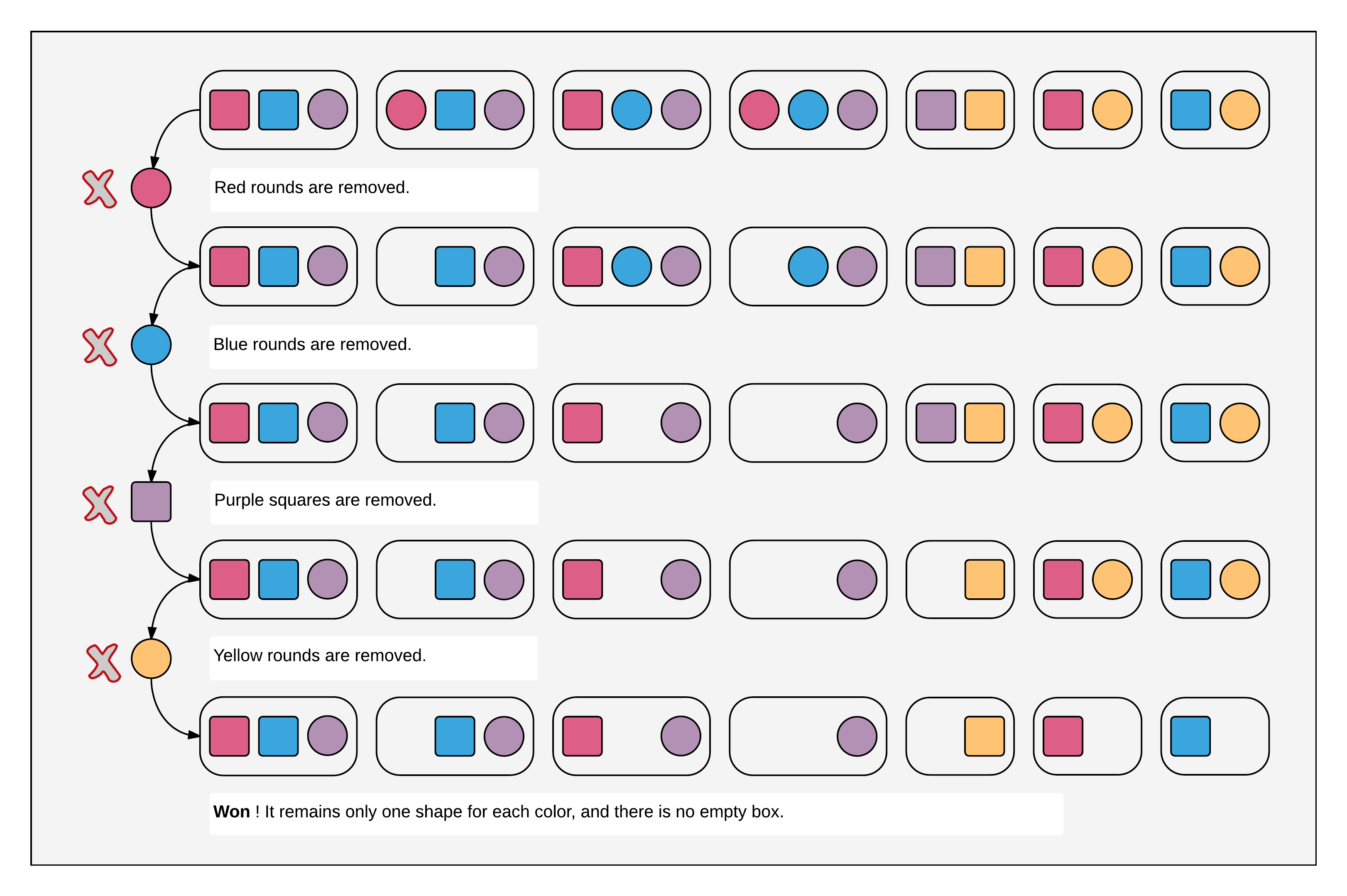}
\caption{An example of game play.}
\label{aparty}
\end{center}
\end{figure}

The initial configuration (boxes and tokens) is said to be an \emph{instance} of the game. It is said to be \emph{feasible} if and only if it admits at least one solution.

\section{SAT representation}

\subsection{SAT}

In the context of propositional logic, a \emph{literal} is either a propositional variable $v$ or its negation $\neg v$, a \emph{clause} is a disjunction of literals $l_1 \vee \cdots \vee l_m$, and a CNF formula is a conjunction of clauses $q_1 \wedge \cdots \wedge q_n$. An \emph{interpretation} of a formula $\Sigma$ is an assignment of a truth value (true or false) to each variable occurring in $\Sigma$. An interpretation is said to \emph{satisfy} a formula $\Sigma$ if and only if it evaluates this formula to true according to the standard semantic of the operators $\neg, \vee, \wedge$. A formula is said to be \emph{satisfiable} if and only if there exists at least one interpretation that satisfies it.

SAT is the problem of determining whether a given CNF formula is satisfiable or not. For more information, see, for example \cite{biere2009handbook}. 

\subsection{Representing a SAT instance with the colored tokens game}

SAT reduces to the colored tokens game as follows. Each box corresponds to a clause and each color corresponds to a variable. Any square token represents a positive literal, and any round token represents a negative literal.

For example, the following formula encodes the initial configuration of the figure \ref{aparty}. The variable $r$ stands for the color red, $b$ for blue, $p$ for purple, and $y$ for yellow, respectively.

\[ 
(r \vee b \vee \neg p) \wedge 
(\neg r \vee b \vee \neg p) \wedge
(r \vee \neg b \vee \neg p) \wedge
(\neg r \vee \neg b \vee \neg p) \wedge
(p \vee y) \wedge
(r \vee \neg y) \wedge
(b \vee \neg y)
\]

Clearly, any interpretation satisfying the resulting formula (if applicable) corresponds to a solution of the underlying game instance.

\section{A variant  \label{section-variant}}

Here is a variant of the game based on alternative rules : the player removes one token at a time and can either remove a square or a round token of any color. The game ends when each box contains exactly one token. The player wins if and only if all the tokens of each color have the same shape.

SAT reduces to this variant in the same way as to the original game thanks to the following property.

\begin{proposition}
Any instance of the variant is feasible if and only if the corresponding instance of the original game is feasible.
\end{proposition}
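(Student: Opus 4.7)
The plan is to reduce both game conditions to set-theoretic descriptions of which tokens \emph{survive}, and then prove the equivalence at that level. For the original game, a feasible instance corresponds to a choice, for each color $c$, of a shape $s(c) \in \{\text{square},\text{round}\}$ such that removing every color-$c$ token of shape $s(c)$ leaves every box nonempty. For the variant, I would first argue that feasibility is equivalent to the existence of a set $T$ of surviving tokens containing exactly one token per box and such that, for each color, all color-$c$ tokens in $T$ share a common shape. The nontrivial direction of this reformulation is that whenever such a $T$ exists, the player can actually reach it by removing the complement one token at a time; this is handled by processing boxes one at a time, reducing each box to its designated survivor in $T$ before moving on, so that no box is ever emptied before the terminal state is reached.

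Once both games are rephrased in static survivor-set terms, the two implications become short. For \emph{original $\Rightarrow$ variant}, start from an original solution with choice $s(\cdot)$: the survivor set $S$ contains at least one token per box and, by construction, no color-$c$ token in $S$ has shape $s(c)$, so each color has at most one shape present in $S$. Reduce $S$ to $T \subseteq S$ by discarding all but one token per box (arbitrarily); the single-shape-per-color property is inherited from $S$, and $T$ is a valid variant end-state.

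For \emph{variant $\Rightarrow$ original}, start from a variant end-state $T$ and define $s(c)$ to be the shape absent from color-$c$ tokens in $T$, chosen arbitrarily when no color-$c$ token appears in $T$. Removing all tokens of color $c$ and shape $s(c)$, for every color, leaves every element of $T$ intact; hence every box still contains at least one token, yielding an original solution.

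The main obstacle I expect is the set-theoretic reformulation of the variant, since its rules are stated sequentially while the original game is stated statically: one must check both that any winning play terminates at such a $T$ and that any such $T$ is reachable by a legal sequence of moves that never empties a box. Once this bridge is built, both directions of the equivalence amount to routine inclusion arguments between survivor sets.
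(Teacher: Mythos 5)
Your proof is correct and follows essentially the same two-step argument as the paper: obtain a variant solution from an original one by discarding all but one surviving token per box, and obtain an original solution from a variant end-state by keeping, for each color, the shape that survives there. The only addition is your explicit check that a static survivor set is reachable by a legal move sequence, a detail the paper's sketch leaves implicit.
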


\begin{proof}[Sketch of the proof]
~
\begin{enumerate}
\item Any solution of the variant can be obtained from a solution of the original problem by removing all tokens but one from each box.
\item Given a solution $s_v$ of the variant, a solution of the original game can be obtained in the following way: from the initial state, for each color, the tokens of the shape occurring in $s_v$ are kept, and the other ones are removed.
\end{enumerate}
\end{proof}

\section{Interests of the game and related works}

There are several examples of links between SAT and games in the literature. For example, SAT encodings for the popular Sudoku game are presented in \cite{DBLP:conf/isaim/LynceO06} and \cite{kwon2006optimized}. Conversely, the famous Candy Crush game was proved to be NP-hard thanks to a reduction from SAT in \cite{DBLP:journals/corr/Walsh14}. There is also a WEB application, "The SAT game" from Olivier Roussel (CRIL université d'Artois), which proposes to solve a SAT instance represented as an integer matrix where each line is a clause. Variables and literals are represented by an integers in the same way as in the largely used DIMACS format. 

The aim of the colored tokens representation is to be easier to grasp for humans. It could be used for the vulgarization of propositional satisfiability, but also to analyze the strategies of the best players and to use them in designing new search heuristics or branching rules. The colored tokens representation can also inspire new search heuristics by itself. For example, a local search solver inspired from the variant presented section \ref{section-variant} could explore a search space where each configuration selects only one literal per clause, with a cost function which penalizes the variables occurring both in positive and negative forms.

\bibliographystyle{plain}
\bibliography{bibli}

\begin{thebibliography}{1}

\bibitem{biere2009handbook}
Armin Biere, Marijn Heule, and Hans van Maaren.
\newblock {\em Handbook of satisfiability}, volume 185.
\newblock ios press, 2009.

\bibitem{kwon2006optimized}
Gihwon Kwon and Himanshu Jain.
\newblock Optimized cnf encoding for sudoku puzzles.
\newblock In {\em Proc. 13th International Conference on Logic for Programming
  Artificial Intelligence and Reasoning (LPAR2006)}, pages 1--5, 2006.

\bibitem{DBLP:conf/isaim/LynceO06}
In{\^{e}}s Lynce and Jo{\"{e}}l Ouaknine.
\newblock Sudoku as a {SAT} problem.
\newblock In {\em International Symposium on Artificial Intelligence and
  Mathematics {(ISAIM} 2006), Fort Lauderdale, Florida, USA, January 4-6,
  2006}, 2006.

\bibitem{DBLP:journals/corr/Walsh14}
Toby Walsh.
\newblock Candy crush is np-hard.
\newblock {\em CoRR}, abs/1403.1911, 2014.

\end{thebibliography}

\end{document}